\newcommand{\keywords}[1]{\par\addvspace\baselineskip
\noindent\keywordname\enspace\ignorespaces#1}
\begin{document}

\title{Bin Packing with Multiple Colors}

\author{Hamza Alsarhan, Davin Chia, Ananya Christman\thanks{Corresponding author}, Shannia Fu, \and Yanfeng Jin}

\institute{Middlebury College, Middlebury VT 05753\\
\email{\{halsarhan, dchia, achristman, sfu, yjin\}}@middlebury.edu}

\maketitle

\medskip

\begin{abstract}
\emph{
In the Colored Bin Packing problem a set of items with varying weights and colors must be packed into bins of uniform weight limit such that no two items of the same color may be packed adjacently within a bin. We solve this problem for the case where there are two or more colors when the items have zero weight and when the items have unit weight. Our algorithms are optimal and run in linear time. Since our algorithms apply for two or more colors, they demonstrate that the problem does not get harder as the number of colors increases. We also provide closed-form expressions for the optimal number of bins.
}
\keywords{algorithms, combinatorial optimization, bin packing, colors, class}
\end{abstract}\section{Introduction}

In the \textit{Black and White Bin Packing} problem proposed by Balogh et al.~\cite{jbalogh}, we are given a set of items each with a weight and a color - either white or black. We are given an unlimited number of bins each with a weight limit and we assume that no item weighs more than the bin weight limit. Formally, the input is a set $S$ of $n$ black and white items with weights $w_1, \ldots, w_n$ and bins of weight limit $C$ where each $w_i < C$. The goal is to pack the items in as few bins as possible while maintaining that no two items of the same color are packed adjacently within a bin. We extend the work of~\cite{jbalogh} by considering more than 2 colors. Specifically, we consider the offline case where all items are given in advance and there is no ordering among them. We first solve the problem for the case where the items have zero weight. We present a linear time algorithm, \textsc{Alternate-Zero}, that solves this problem optimally. We then present another linear time algorithm,  \textsc{Alternate-Unit}, that optimally solves the problem for unit-weight items.

This \textit{Colored Bin Packing} problem has many practical applications. For example, television and radio stations often assign a set of programs to their channels. Each program may fall into a genre such as  comedy, documentary, and sports, on TV, or various musical genres on radio. To maintain a diverse schedule of programs, the station would like to avoid broadcasting two
programs of the same genre one after the other. This problem can be modeled as Colored Bin Packing where the items correspond to programs, colors to genres and bins to channels.

Another application involves generating diverse content to be displayed on websites. Many websites prefer displays that alternate between different types of information and advertisements and the Colored Bin Packing problem can be used to generate such a display. The items correspond to the contents to display, the colors to the type of content, and the bins to the size of the display.

The remainder of this paper is organized as follows. In Section~\ref{related} we discuss several results related to the Colored Bin Packing problem. In Section~\ref{zero} we present our results for zero-weight items. In Section~\ref{unit} we present our main result - a solution for the problem for unit-weight items. In Section~\ref{conc} we summarize our results and discuss possible extensions.

\subsection{Related Work}

\label{related}

Classic Bin Packing is a well known problem, with many applications throughout various fields. Bin Packing with color constraints is relatively recent, arising with the increasing complexity of contemporary industrial processes. The use of color gives classic Bin Packing added flexibility in modelling real-world problems. For example, Oh and Son used color, specifically, the requirement that two items with the same color cannot be packed in the same bin, to efficiently assign tasks in real time within a multiprocessor system~\cite{yoh}. Their result was a modified First Fit algorithm that is 1.7-competitive in the worst case, and 1.1 in the average case. Dawande et al. investigated a version of Colored Bin Packing where each bin has a maximum \textit{color capacity}~\cite{mdawande} i.e. a limit on the number of items of a particular color. Such a problem models the slab design problem in the production planning process of a steel plant. The authors present two 3-approximation algorithms, classical First-Fit-Decreasing, and a modified First Fit. Xavier and Miyazawa use Bin Packing with Colors (they refer to the colors as \textit{classes}) to model Video-on-Demand applications~\cite{exavier}. Specifically, there is a server of multiple disks that each have limited storage capacity and can hold video files of varying sizes and genres (i.e. classes). The application is given an expected number of requests for movies based on movie/genre popularity. The goal is to construct a server that maximizes the number of satisfied requests.

Epstein et al. study a version of Colored Bin Packing where items have weights and colors and bins have weight and color constraints, specifically no more than $k$ items of the same color may be packed in a bin~\cite{lepstein}. They prove upper and lower bounds for several variants of this problem in both the offline and online settings. A more generalized version of constrained bin packing where the constraints among items are defined in a conflict graph have been studied by Jansen~\cite{kjansen} and Muritiba et al.~\cite{amuritiba}.

In this paper, we investigate Colored Bin Packing with \textit{alternation} constraints - i.e. two items with the same color cannot be packed adjacently to each other in a bin. This 2-color variant was first introduced by Balogh et al., who studied both the offline and online versions of the problem~\cite{jbalogh}. For the offline version the authors presented a 2.5-approximation algorithm. For the online version, they proved that classical algorithms (First, Best, Worst, Next, Harmonic) are not constant competitive, and hence do not perform well. They further proved a universal lower bound of approximately 1.7213. This exceeds classical Bin Packing's upper bound of 1.5889, proving that Colored Bin Packing with alternation is harder~\cite{sseidens}. The authors' main result was a 3-competitive online algorithm, \textit{Pseudo}. The main concept of the algorithm is quite elegant. Using ``pseudo", i.e. dummy, items of different color and zero weight, their algorithm creates an acceptable sequence of colors that are later substituted with actually released items. If any bin overflows, Pseudo redistributes the extra items using Any Fit into available bins, and creates new bins if necessary.

Bohm et al. explored a zero-weight version of the 2-color variant of this problem~\cite{mbohm}. They show that in this case, the optimal number of bins equals the color \textit{discrepancy} - the absolute difference between the number of items of the two colors. The paper's main result is an optimal algorithm, named Balancing Any Fit, which operates on the simple principle that an item should be packed in an available bin containing the most oppositely-colored items. This is made easier since weight is not an issue.

Dosa and Epstein~\cite{gdosa} extend the results from~\cite{jbalogh}. They prove that online Colored Bin Packing with alternation for 3 or more colors is harder than the 2-color version. Furthermore, the authors show that 2-color algorithms do not apply to 3 or more color problems. Lastly, they present a 4-competitive algorithm for the problem. This is a modified version of the earlier Pseudo, appropriately named Balanced-Pseudo. Instead of randomly assigning items to available bins, Balanced-Pseudo assigns an item to the bin whose top color is the most frequent.

It was proven in~\cite{gdosa}, that no competitive algorithm exists for online Colored Bin Packing when there are more than two colors even when the items have zero-weight. Therefore, in this paper we consider offline Colored Bin Packing when there are more than two colors. We present optimal linear time algorithms that solve this problem for two or more colors when the items have zero-weight and when the items have unit-weight. Since our algorithms are optimal for two or more colors, they demonstrate that the problem does not get harder as the number of colors increases.

\section{Zero-Weight Colored Bin Packing}

\label{zero}

In the zero-weight case, the only constraint is that no items of the same color may be adjacent in a bin. Since the items have no weight, the bins do not have a weight limit (formally, $L = n$). Thus, we pay special attention to the most frequent color, i.e. the color with the greatest number of items, since this color will determine the final number of bins used.  We refer to this color as $MaxColor$ and the set of all non-$MaxColor$ colors as $OtherColors$. We let $MaxCount$ and $OtherCount$ denote the number of items of $MaxColor$ and $OtherColors$, respectively, that have not yet been packed.

Our general approach is to alternate between $MaxColor$ items and $OtherColors$ items such that we pack as many $MaxColor$ items as possible. This technique allows us to pack as many $MaxColor$ items in a bin as possible, and thereby minimizes the number of bins we use.

\begin{algorithm}
\caption{\textsc{alternate-zero}($S$). $S$ is a set of $n$ items.}
\label{zeroalg}
\begin{algorithmic} [1]
\STATE $MaxColor$: most frequent color
\STATE $MaxCount$: number of items of $MaxColor$
\STATE $OtherColors$: set of all non-$MaxColor$ colors
\STATE $OtherCount$: number of items of any $OtherColors$.
\STATE $Discrepancy = MaxCount - OtherCount$
\IF {$Discrepancy \le 0$}
\STATE Alternate between items of $OtherColors$ until there is one fewer $OtherColors$ item than $MaxColor$ item remaining.
 \STATE Start with a $MaxColor$ item and alternate between items of $OtherColors$ and $MaxColor$ until all items are packed. Only one bin is needed.
\ELSE
\STATE Alternate packing an item of $MaxColor$ and an item of any arbitrary $OtherColors$ until all $OtherColors$ items are packed. This requires one bin.
\STATE There will be $MaxCount - OtherCount - 1$ remaining $MaxColor$ items so use this many number of bins to pack each of these items. The total number of bins is $Discrepancy$.
\ENDIF
\end{algorithmic}

\end{algorithm}

As in~\cite{mbohm}, we describe the number of required bins in terms of the \textit{discrepancy} of the items, i.e. the difference between the number of $MaxColor$ and $OtherColors$ items that have not already been packed. For example, if there are 6 black items, 3 white items and 2 red items that must be packed, the discrepancy is 6-(3+2)=1. It was shown in~\cite{mbohm} that for the case where there are exactly two colors, the discrepancy is a lower bound for the number of required bins. In particular, the minimum number of bins is achieved by packing the first bin by starting with a $MaxColor$ item, then alternating between an $OtherColors$ item and a $MaxColor$ item until there are no $OtherColors$ items left, then packing one more $MaxColor$ item. At this point there will be $MaxCount-OtherCount-1$ remaining $MaxColor$ items so this many more bins are needed, for a total of $MaxCount - OtherCount$ bins, i.e. the discrepancy. Our linear-time algorithm \textsc{Alternate-Zero} solves the version of the problem where there are more than two colors (see Algorithm~\ref{zeroalg}).

\begin{theorem}
The Algorithm \textsc{alternate-zero} is optimal for the Zero-Weight Colored Bin Packing Problem.
\end{theorem}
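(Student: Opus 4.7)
The plan is to prove optimality by combining a lower bound argument with a feasibility verification. For the lower bound, observe that in any valid packing, between two $MaxColor$ items in the same bin there must sit at least one non-$MaxColor$ item as a separator. So a bin containing $a$ $MaxColor$ items contains at least $a - 1$ $OtherColors$ items, and summing over all $B$ bins yields $MaxCount \le OtherCount + B$, giving $B \ge Discrepancy$. Trivially $B \ge 1$ whenever the input is nonempty, so the combined lower bound is $\max(1, Discrepancy)$. The algorithm attains this: in Case 2 it uses $1 + (MaxCount - OtherCount - 1) = Discrepancy$ bins, and in Case 1 it uses a single bin.

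It remains to confirm that the bins produced are valid packings. Case 2 is immediate: the first bin has the pattern $M\, O\, M\, O\, \ldots\, O\, M$ in which every adjacent pair pairs a $MaxColor$ item with a non-$MaxColor$ item, and the residual single-item bins trivially satisfy the adjacency constraint.

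Case 1 is the main technical obstacle. The single bin has an initial segment of $m = OtherCount - MaxCount + 1 \ge 1$ alternating $OtherColors$ items followed by $M\, O\, M\, O\, \ldots\, O\, M$ that interleaves all $k = MaxCount$ $MaxColor$ items with $k - 1$ further $OtherColors$ items. Adjacencies involving a $MaxColor$ item are automatic since both of its neighbors come from $OtherColors$; the real work is to distribute the $OtherColors$ items between the initial segment and the interior slots so that the first $m$ positions form an alternating sequence. The key observation is that because $MaxColor$ is the most frequent color, every other color $c$ has count $n_c \le k$. The initial segment can absorb at most $\lceil m/2 \rceil$ copies of any single color, while the $k - 1$ interior $OtherColors$ slots lie between $MaxColor$ items and therefore carry no mutual adjacency constraint, accommodating up to $k - 1$ further copies of any color. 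Since $n_c \le k \le (k - 1) + \lceil m/2 \rceil$ whenever $m \ge 1$, a greedy assignment --- place the most frequent $OtherColor$ in the odd initial positions up to its capacity, fill the even initial positions with the next most frequent colors in turn, and place all residual items in the interior $OtherColors$ slots --- produces a valid alternating initial segment. Combining this feasibility check with the matching bin count establishes optimality.
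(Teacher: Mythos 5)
Your proof is correct, but it takes a genuinely different route from the paper's on both sides of the argument. For the lower bound you give an explicit counting argument --- any bin holding $a$ $MaxColor$ items must contain at least $a-1$ separating $OtherColors$ items, so every valid packing uses at least $\max(1,\,MaxCount-OtherCount)$ bins --- whereas the paper handles optimality in its Case 2 only informally (``the first bin contains as many $MaxColor$ items as possible and none of the additional bins can be combined''), leaning on the discrepancy bound of B\"ohm et al.\ for the two-color case; your version is the more rigorous of the two. For feasibility in Case 1 you argue statically: you fix the shape of the single bin (an alternating prefix of $m = OtherCount - MaxCount + 1$ $OtherColors$ items followed by $M\,O\,M\,O\cdots O\,M$) and show the $OtherColors$ items can be distributed so that no color exceeds the prefix capacity $\lceil m/2\rceil$, via $n_c \le MaxCount \le (MaxCount-1)+\lceil m/2\rceil$. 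The paper instead argues dynamically: it shows by contradiction that greedily alternating $OtherColors$ items can never get stuck before $OtherCount$ drops to $MaxCount-1$, since getting stuck would force a single remaining color $X$ with at least $MaxCount+1$ items, contradicting the maximality of $MaxColor$. The paper's route is shorter; yours buys an airtight lower bound and an explicit description of the packing, at the cost of invoking (and slightly glossing) the standard rearrangement fact that a multiset of size $m$ has an ordering with no two equal items adjacent iff its maximum multiplicity is at most $\lceil m/2\rceil$ --- in particular, your greedy should also specify how leftover odd positions are filled when the most frequent $OtherColor$ runs out, though that detail is routine.
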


\begin{proof}

We consider two cases based on $Discrepancy$:

\medskip

\noindent \underline{Case 1}: $Discrepancy \le 0$.
In this case there are at least as many $OtherColors$ items as there are $MaxColor$ items. Therefore, we need only one bin, which is the fewest number of bins any algorithm will use. To do this, we alternate between items of $OtherColors$ until $OtherCount$ is one less than $MaxCount$. We then add one $MaxColor$ item to the bin. From here, we alternate between any $OtherColors$ item and a $MaxColor$ item until all items are packed. The last item will be one of $MaxColor$.

Note it is always possible to alternate the $OtherColors$ items until $OtherCount$ is one less than $MaxCount$. Suppose, by contradiction, that when we stop alternating, $OtherCount$ is equal or greater than $MaxCount$. Since we alternated between differently colored $OtherColors$ items, we can no longer alternate only if $OtherColors$ now contains just a single color, $X$. By definition of alternating, we know that the bin must contain at least one $X$-colored item. As we have not packed any $MaxColor$ items yet, there are still $MaxCount$  $MaxColor$ items remaining. This means there are at least $MaxCount + 1$ $X$-colored items, which contradicts our original assumption that $MaxColor$ is the color with the maximum number of items. Thus, we can always alternate $OtherColors$ items until $OtherCount$ is one less than $MaxCount$.

Thus, after we fill the first bin we have one more $MaxColor$ items than all $OtherColors$ items remaining. It is easy to see that we can alternate between these items in the same bin. Therefore, we use one bin.

As an example, if we are given 3 White (W), 2 Black (B), 2 Yellow (Y), and 1 Red (R) items, then $MaxColor$ is White, $MaxCount = 3$, $OtherCount = 5$, so $Discrepancy = -2$. One optimal packing is: BYRWBWYW

\medskip

\noindent \underline{Case 2}: $Discrepancy > 0 $. In this case, there are enough $MaxColor$ items that each $OtherColors$ item is needed to be packed in between two $MaxColor$ items. Therefore, all $OtherColors$ items can be regarded as a single color not equal to $MaxColor$, making this the black/white bin packing problem~\cite{jbalogh}. The first bin will start with a $MaxColor$ item, alternate between $MaxColor$ and $OtherColors$ items, and end with a $MaxColor$ item. Thus, the first bin will contain all the $OtherColors$ items and $OtherCount +1$ $MaxColor$ items. There will be $MaxCount - OtherCount -1$ $MaxColor$ items remaining, each of which must be packed in a separate bin. This gives a total of $MaxCount - OtherCount$ bins, i.e. the discrepancy. Since the first bin contains as many $MaxColor$ items as possible and none of the additional bins can be combined, this is the optimal packing.
\end{proof}

As an example, if we are given 8 W, 2 B, and 2 Y items, then $MaxCount = 8$, $OtherCount = 4$, so $Discrepancy = 4$. One optimal packing is: WBWBWYWYW / W / W / W ( where / denotes a new bin).

\section{Unit-Weight Colored Bin Packing}

\label{unit}

\begin{theorem}
The Algorithm \textsc{alternate-unit} is optimal for the Unit-Weight Colored Bin Packing Problem.
\end{theorem}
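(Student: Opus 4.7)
The plan is to mirror the zero-weight proof by splitting on the sign of $\mathit{Discrepancy} = \mathit{MaxCount} - \mathit{OtherCount}$, but now incorporating the bin capacity $C$. First I would establish two lower bounds on any valid packing: the capacity bound $\lceil n/C \rceil$, since each bin holds at most $C$ items; and the discrepancy bound, obtained by noting that if $b$ bins collectively hold all $\mathit{MaxColor}$ items, they must contain at least $\mathit{MaxCount} - b$ $\mathit{OtherColors}$ items as separators, giving $b \geq \mathit{MaxCount} - \mathit{OtherCount}$. Hence $\mathrm{OPT} \geq \max\bigl(\lceil n/C \rceil,\; \mathit{MaxCount} - \mathit{OtherCount}\bigr)$, and the goal is to show that \textsc{alternate-unit} attains this bound.

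Next I would verify optimality by case analysis. In Case~1 ($\mathit{Discrepancy} \leq 0$), the target is $\lceil n/C \rceil$, and I would show the algorithm fills each bin to capacity (except possibly the last) via a valid alternation. The key subclaim is the invariant, borrowed from the zero-weight proof, that at each step no remaining color is a strict majority of what is left to pack, so a color different from the previously placed item is always available. In Case~2 ($\mathit{Discrepancy} > 0$), the target is $\max(\mathit{MaxCount}-\mathit{OtherCount},\,\lceil n/C\rceil)$. I would split further: when $2\mathit{OtherCount}+1 \leq C$, all $\mathit{OtherColors}$ items fit as separators inside one ``dense'' bin containing $\mathit{OtherCount}+1$ $\mathit{MaxColor}$ items, leaving $\mathit{MaxCount}-\mathit{OtherCount}-1$ residual $\mathit{MaxColor}$ singletons for a total of $\mathit{Discrepancy}$ bins; when $2\mathit{OtherCount}+1 > C$, the algorithm opens several dense bins each packing $\lceil C/2\rceil$ $\mathit{MaxColor}$ items interleaved with $\lfloor C/2\rfloor$ $\mathit{OtherColors}$ items and then any residual $\mathit{MaxColor}$ singletons, matching the larger of the two lower bounds.

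The main obstacle will be the capacity-tight sub-case of Case~2, where $\mathit{OtherColors}$ must be spread across several dense bins and the bin count must exactly equal $\max(\mathit{MaxCount}-\mathit{OtherCount},\,\lceil n/C\rceil)$. Here a careful arithmetic argument depending on the parity of $C$ and on $\mathit{OtherCount} \bmod \lfloor C/2\rfloor$ will be needed, together with a feasibility check that each dense bin can indeed be alternately packed using items drawn from more than one color in $\mathit{OtherColors}$ when $\mathit{MaxColor}$ alone cannot separate them.
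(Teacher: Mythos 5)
Your plan hinges on the claim that the optimum equals $\max\bigl(\lceil n/C\rceil,\ \mathit{MaxCount}-\mathit{OtherCount}\bigr)$ and that \textsc{Alternate-Unit} attains this value, but that target is wrong when $C$ is even, so the matching step of your argument cannot be carried out. There is a third necessary lower bound you omit: a bin of capacity $C$ can hold at most $\lceil C/2\rceil$ items of $\mathit{MaxColor}$, so any packing needs at least $\lceil \mathit{MaxCount}/\lceil C/2\rceil\rceil$ bins. For odd $C$ this is implied by your two bounds (from $bC\ge n$ and $b\ge \mathit{MaxCount}-\mathit{OtherCount}$ one gets $b(C+1)\ge 2\,\mathit{MaxCount}$), which is why your formula looks plausible, but for even $C$ it is not. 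Concretely, take $C=4$ with $5$ $\mathit{MaxColor}$ and $3$ $\mathit{OtherColors}$ items: your bound gives $\max(2,2)=2$, yet two bins can host at most $4$ $\mathit{MaxColor}$ items, so the optimum is $3$ (e.g. WBWB / WBW / W), which is what the algorithm produces. Even more starkly, with $C=2$, $5$ $\mathit{MaxColor}$ and $1$ $\mathit{OtherColors}$, your bound is $4$ while the optimum is $5$. So in the even case you would either be unable to prove the algorithm meets your bound (it cannot) or would wrongly conclude it is suboptimal; the ``careful arithmetic'' you defer to the capacity-tight sub-case is exactly where the proof breaks.

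Note also that the paper does not argue by matching a closed-form lower bound at all: it argues structurally that packing as many $\mathit{MaxColor}$ items as possible per bin is never harmful when the discrepancy is positive, splits on the parity of $L$, and for even $L$ runs a \textsc{Combine} phase that merges single-$\mathit{MaxColor}$ bins using the $\mathit{OtherColors}$ items topping full bins, showing no further merging is possible at termination; for odd $L$ it splits on whether the discrepancy can be driven to zero during the initial packing (the condition $D\le\lceil \mathit{OtherCount}/\lfloor L/2\rfloor\rceil$), a distinction your case split on $2\,\mathit{OtherCount}+1$ versus $C$ does not capture. If you want to keep your lower-bound-matching strategy, you must add the bound $\lceil \mathit{MaxCount}/\lceil C/2\rceil\rceil$ to the target and then verify, case by case (including the \textsc{Combine} bookkeeping for even $C$), that the algorithm's bin count equals the maximum of all three quantities.
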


Algorithm \textsc{alternate-unit} solves the problem when the items have unit weight and each bin has a weight limit $L$. The general idea is as follows. The unit-weight case introduces weight constraints. However, color constraints, specifically the discrepancy, may force us to use more bins than the weight constraint alone. Therefore, the first stage of our algorithm depends on discrepancy. Specifically, if discrepancy does not pose a problem (i.e. is not more than zero), we simply pack the items as in the zero-weight case (using one bin), then split the items into bins of weight equal to the weight limit. If \textit{discrepancy} is an issue (i.e. is more than zero), how we pack depends on whether $L$ is even or odd. In both cases, we pack bins by alternating between $MaxColor$ and $OtherColor$ items. If $L$ is even, we may be able to reduce the total number of bins by using the $OtherColors$ items that top some bins to combine bins that contain only a single $MaxColor$ item. Since all bins are optimally packed, this yields an optimal packing. If $L$ is odd, every full bin will contain one more $MaxColor$ item than $OtherColors$ item, so we check if our packing eventually reduces discrepancy to zero. If so, we pack as we did in the zero-weight case. Otherwise, we will have bins containing only a single $MaxColor$ item. Since these bins cannot be combined and all other bins are optimally packed, the final packing is optimal. Algorithm~\ref{unitalg} describes the steps in detail.

\begin{proof}
For the unit-weight case, we must deal with both color and weight constraints. As in the zero-weight case, we consider two cases based on the discrepancy:

\medskip

\noindent \underline{Case 1}: $Discrepancy \le 0$.
In this case there are enough $OtherColors$ items to pack in between all $MaxColor$ items, so the color constraints do not pose a problem. We know from the zero-weight case that the \textsc{Alternate-Zero} algorithm satisfies the color constraint when discrepancy is less than or equal to 0. Therefore, we call \textsc{Alternate-Zero} to pack the items. Since \textsc{Alternate-Zero} does not deal with weights, it will pack the items into one bin according to color. We then simply split this bin into bins of weight limit $L$. Since \textsc{Alternate-Zero} is optimal, we will use the optimal number of bins.

As an example, if we are given 4 W, 3 B, and 2 Y items and $L = 3$, then $MaxCount = 4$, $OtherCount = 5$, so $Discrepancy = -1$. One optimal packing is BYW / BWB / WYW.

\smallskip

\noindent \underline{Case 2}: $Discrepancy > 0$.
This is the complex case as now the packing depends on both the weight and color constraints. We pack bins by starting with a $MaxColor$ item and alternating between $OtherColors$ and $MaxColor$ items until the bin is full. Since the discrepancy is more than 0, there may be $MaxColor$ items remaining. The remainder of the packing depends on whether $L$ is even or odd.

We first note that since there are more $MaxColor$ items than $OtherColors$ items, packing as many $MaxColor$ items as possible into a bin can never yield a sub-optimal solution. This is because the only reason a $MaxColor$ item would be ``saved" for another bin is if it is needed for packing between two $OtherColors$ items. However, this would never happen if there are more $MaxColor$ items than $OtherColors$ items. We use this fact to prove the optimality of our algorithms for the case where discrepancy is more than zero.

\begin{itemize}

\item \underline{Case(2a)}: $L$ is even.

We refer to the packing before the call to \textsc{Combine} (see Algorithm~\ref{unitalg}) as the \textit{initial packing}. After the initial packing, there will be \textit{M-bins}, \textit{F-bins}, and possibly one \textit{P-bin} (we use the same definitions of these bins as stated in \textsc{Combine}).

\newpage

\begin{algorithm}\caption{ \textsc{Alternate-Unit}($S$, $L$). Input is a set $S$ of $n$ items and bin weight limit $L$.}
\begin{algorithmic}[!htbp]
\label{unitalg}
\STATE $MaxColor$: most frequent color
\STATE $MaxCount$: number of items of $MaxColor$
\STATE $OtherColors$: set of all non-$MaxColor$ colors
\STATE $OtherCount$: number of items of any $OtherColors$
\STATE $Discrepancy = MaxCount - OtherCount$
\STATE $Solution = \emptyset $

\IF {$Discrepancy \le 0$}
	   \STATE $Solution \leftarrow ${\textsc{Alternate-Zero}}($S$).
       \STATE Split $Solution$ into $\lceil\frac{n}{L}\rceil$ bins, each containing $L$ items (possibly fewer for the last bin if there are not enough items to fill it).
	   \STATE \textbf{return} $Solution$
\ELSE
	   \IF {$L$ is even}
            \WHILE {there are $OtherColors$ items remaining}
		          \STATE {Open a new bin $C$. Start with packing a $MaxColor$ item, alternating with an $OtherColors$ item (without overfilling $C$). Add $C$ to $Solution$.}
            \ENDWHILE
		    \STATE {Pack each of the remaining $MaxColor$ items into a separate bin. Add these bins to $Solution$.}
		    \STATE {Call  \textsc{Combine($Solution$)} to combine all the single-item bins. }
            \STATE {\textbf{return} $Solution$}
	   \ELSIF {$L$ is odd}
            \IF {$Discrepancy \leqslant   \lceil \frac{OTHER}{\lfloor{L/2}\rfloor} \rceil $}
                    \STATE //there are enough $OtherColors$ to reduce $Discrepancy$ to 0
				    \WHILE {$Discrepancy > 0$ }
					       \STATE {Open a new bin $C$. Start with packing a $MaxColor$ item, alternating with an $OtherColors$ item (without overfilling $C$). Add $C$ to $Solution$.}
				    \ENDWHILE
                    \STATE {Let $R$ denote the set of remaining items. }
                    \STATE {$Remaining \leftarrow $ \textsc{Alternate-Zero}($R$) //apply the \textsc{Alternate-Zero} algorithm while satisfying the weight constraints.}
                    \STATE {Split $Remaining$ into bins containing $L$ items each (possibly fewer for the last bin if there are not enough items to fill it).}
                    \STATE {Add $Remaining$ to $Solution$}.
	               \STATE {\textbf{return} $Solution$}
			\ELSE
				\WHILE {there are still non-packed $OtherColors$ items:}
                    \STATE {Open a new current bin $C$. Start with packing a $MaxColor$ item, alternating with an $OtherColors$ item, and topping with a $MaxColor$ item (without overfilling $C$). Add $C$ to $Solution$.}
				    \STATE {Pack each of the remaining $MaxColor$ items in its own bin. Add these bins to $Solution$.}
                    \STATE {\textbf{return} $Solution$}
                \ENDWHILE
			\ENDIF
	   \ENDIF
\ENDIF
\end{algorithmic}
\end{algorithm}

\begin{algorithm}\caption{\textsc{Combine} ($Bins$). $Bins$ is a set of bins.}
\begin{algorithmic}[!htbp]

	\STATE Identify the following sets of bins:
\STATE M-bins (MaxColor bins): contain only one item and the item is of $MaxColor$.
\STATE P-bins (Partially full bins): contain both $MaxColor$ and $OtherColors$ items, are topped with a $MaxColor$ item, and can fit at least two more items. There will be at most one of this bin type.
\STATE F-bins (Full bins): fully packed and are topped with an $OtherColors$ item.
\IF {P-bins is not empty}
	   \STATE Let $current$ be the bin in P-bins.
\ELSIF {M-bins is not empty}
      \STATE Let $current$ be a bin in M-bins.
      \STATE Remove $current$ from M-bins.
\ENDIF
\WHILE {F-bins is not empty and M-bins is not empty}
		\IF {weight of $current$ $+ 2 > L$}
			\STATE Remove $current$ from its containing set.
            \STATE Let $current$ be the next bin in M-bins.  //switch to a new bin since $current$ is full
		\ENDIF
		\STATE Let $x$ be the top-most $OtherColors$ item from a bin $F$ in F-bins. Remove $F$ from F-bins.
        \STATE Let $y$ be a $MaxColor$ item from a bin $M$ in M-bins. Remove $M$ from M-bins.
        \STATE Remove $x$ and $y$ from their containing bins.
		\STATE Pack $x$ followed by $y$ into $current$.
	\ENDWHILE
\end{algorithmic}
\end{algorithm}

\newpage

M-bins contain only one item and the item is a $MaxColor$ item. F-bins are completely full so each will start with a $MaxColor$ item, end with an $OtherColors$ item, and contain an equal number of $MaxColor$ and $OtherColors$ items. The sub-routine \textsc{Combine} uses these $OtherColors$ items to combine the M-bins and reduce the total number of bins. If we run out of $OtherColors$ items while packing a bin, the bin may become a P-bin. Since we pack as many $MaxColor$ items as possible, we top the bin with a $MaxColor$ item. If the bin is able to accept two or more items after we do so, it will be a P-bin. Since this occurs only when we run out of $OtherColors$ items, at most one P-bin may exist.

As an example, if we are given 15 W, 4 B, 3 Y, 3 G items and $L=6$, then $MaxCount = 15$, $OtherCount = 10$ so $Discrepancy = 5$.
We initially pack these items as WBWBWB / WBWYWY / WYWGWG / WGW / W / W / W / W.

\textsc{Combine} uses the $OtherColors$ items that top the full bins to reduce the number of bins to WBWBW / WBWYW / WYWGW / WGWBW / WYWGW.

\medskip

To see why the algorithm is optimal, note that after the initial packing, each of the full bins contain as many $MaxColor$ items as possible and are therefore optimally packed. Therefore, if \textsc{Combine} optimally packs the non-full bins, then the entire solution must be optimal. We prove that \textsc{Combine} optimally packs the non-full bins by showing that the number of bins in the solution cannot be further reduced after {\textsc{Combine}} is finished executing.

There are three conditions that can cause \textsc{Combine} to finish executing:
\smallskip
\begin{enumerate}
\item The solution contains no F-bins.\\

\item The solution contains no M-bins.\\

\item The solution contains neither F-bins nor M-bins.
\end{enumerate}

\medskip

Since combining bins requires both F-bins and M-bins, any one of these three conditions means that we cannot reduce the total number of bins by combining bins.

\medskip

\item \underline{Case(2b)}: $L$ is odd.

We begin by packing bins full by alternating between $MaxColor$ and $OtherColors$ items. We refer to this as the \textit{initial packing}. When we end the initial packing depends on how much the discrepancy reduces as items are packed. To see this, first note that when $L$ is odd, a bin that is packed full by alternating between $MaxColor$ and $OtherColors$ will contain one more $MaxColor$ item than $OtherColors$ item. Therefore, when $L$ is odd, the overall number of bins depends on whether the initial discrepancy is low enough that this packing will eventually reduce the discrepancy to zero. We let $D$ denote the discrepancy. Note that after packing one bin in this way, $D$ reduces by 1, after packing two bins in this way, $D$ reduces by 2, and so on. Therefore, $D$ reduces to zero if we are able to pack $D$ bins. From this we see that if $D \leqslant \lceil \frac{OtherCount}{\lfloor{L/2}\rfloor} \rceil $, then we can pack $D$ bins and therefore eventually reduce $D$ to zero. If this condition holds, then we end the initial packing after we've packed $D$ bins. Since $D$ reduces to zero, we now call \textsc{Alternate-Zero} on the remaining items. Since \textsc{Alternate-Zero} does not deal with weights, it will pack the remaining items into one bin according to color. We then simply split this bin into bins of weight limit $L$.

If $D$ cannot be eventually reduced to zero, then there will be $MaxColor$ items remaining after the initial packing. Each of these items must be packed in its own bin (i.e. an M-bin).

Note that since the initial packing maximally packs the $MaxColor$ items, the bins packed during the initial packing are optimal. The rest of the packing depends on $D$ so we prove the optimality of each case separately:

\begin{itemize}

\item[$\circ$]\underline{Case (2bi)}: $D \leqslant \lceil \frac{OtherCount}{\lfloor{L/2}\rfloor} \rceil $, so $D$ eventually reduces to 0.

Since we know \textsc{Alternate-Zero} is optimal when the discrepancy is 0, if the first $D$ bins are optimally packed, the entire packing must be optimal. We show this by considering the number of P-bins that the initial packing yields. First note that the initial packing may yield at most one P-bin. This is because if there were multiple P-bins, then items from one P-bin could be packed into another, thereby turning the latter into an F-bin. Therefore there may be either one or no P-bins after the initial packing.

\begin{itemize}
\item[$\bullet$] {Initial packing yields a P-bin.}

As an example, if we are given 11 W, 2 B, 2 Y, and 3 G and $L=5$, then $MaxCount = 11$, $OtherCount = 7$ so $D = 4$.

The initial packing yields: WBWBW / WBWBW / YWYWG / WGW

No bins can be combined so this is the optimal solution.

\medskip

If the initial packing yields a P-bin, it means that $D = 1$ when we started filling the P-bin. To see why, note that the $D = MaxCount - OtherCount = 0$ when the initial packing is done.

\smallskip

The initial packing ends when we pack the last excess $MaxColor$ item into the P-bin. This reduces $D$ to 0. Since the initial packing packs each bin full and maximizes the number of $MaxColor$ items per bin, it yields an optimal packing for the first $D$ items packed. Therefore, if we can show that there will be no more items remaining after the initial packing, then we know this packing is final and optimal.

\smallskip

Since $D = 0$, the number of $MaxColor$ and $OtherColors$ items must be equal. Any additional $OtherColors$ item would mean the P-bin would not exist, as the last item packed into the bin would be an $OtherColors$ item. Since there are no $OtherColors$ items remaining there must be no $MaxColor$ items remaining, and therefore no items are remaining.

\smallskip

\item[$\bullet$] {Initial packing does not yield any P-bins.}

As an example, if we are given 10 W, 4 B, 2 Y, and 2 G and $L=5$, then $MaxCount = 10$, $OtherCount = 8$ so $D = 2$.

The initial packing yields 2 bins: WBWBW / WBWBW

The discrepancy of remaining items is now zero so we pack them using \textsc{Alternate-Zero} as: YWYWG / WGW

So the final packing is: WBWBW / WBWBW / YWYWG / WGW





\medskip

We know that the bins packed during the initial packing are optimally packed, so we must show that the remaining items will be optimally packed. There are two sub-cases:

\begin{enumerate}
\item $MaxCount = OtherCount = 0$:

Clearly, there are no remaining items to pack so the packing is optimal.

\item $MaxCount = OtherCount > 0$:

Since there are no excess $MaxColor$ items, there will be no M-bins to combine so the packing is optimal.

 \end{enumerate}

\end{itemize}

\smallskip

 \item[$\circ$] \underline{Case 2bii)}: {$D > \lceil \frac{OtherCount}{\lfloor{L/2}\rfloor} \rceil$ so $D$ cannot be reduced to 0.}

As an example, if we are given 15 W, 3 B, 2 Y, and 2 G items and $L=5$, then $MaxCount = 15$, $OtherCount = 7$ so $D = 8$.

The initial packing yields: WBWBW / WBWYW / WYWGW / WGW / W / W / W / W
No bins can be combined so the solution is optimal.

\medskip

When $D$ cannot be reduced to zero, it means that there were not enough $OtherColors$ items to pack the excess $MaxColor$ items. Since the solution contains bins that are maximally packed with $MaxColor$ items and M-bins, no bins can be combined, so the solution is optimal.
\end{itemize}

\end{itemize}

\end{proof}

\subsection{Time Complexity}
We analyze the time complexity based on the various cases.
If discrepancy is less than or equal to 0, the algorithm runs \textsc{Alternate-Zero} to order the items based on color, then packs the items based on weight, using $O(n)$ time in total (where $n$ is the number of items).

If discrepancy is more than 0, we consider two sub-cases: $L$ is odd or $L$ is even.
If $L$ is odd, we check if we can eventually reduce the discrepancy to 0. If we can, we pack items until $D$ is 0. This takes at most $O(n)$, since we consider each item only
once. When $D$ is zero we call \textsc{Alternate-Zero} which takes $O(n)$ time.  If we cannot
reduce discrepancy to 0, we simply pack items by alternating colors. Since each item is considered exactly once, this takes $O(n)$ time.

If $L$ is even, again we pack by alternating colors then call \textsc{Combine}. The runtime of \textsc{Combine} depends on the
number of bins and since there will be $O(n)$ bins, \textsc{Combine} runs in $O(n)$ time.

\subsection{Number of Bins}

In this section, we provide closed-form expressions for the optimal number of bins. The zero-weight case is straightforward: if the discrepancy is no more than 0, then one bin is needed; otherwise discrepancy number of bins are needed.

We now provide these expressions for the unit-weight case by considering four sub-cases:

\begin{enumerate}

\item $L$ is even and $Disrepancy > 0$:

Since $Discrepency$ is positive, we may eventually combine bins. The final number of bins depends on the initial packing: specifically the number of fully packed bins (F-bins), partially packed bins (P-bins), if any, and bins containing a single $MaxColor$ item (M-bins) - these counts will indicate how many M-bins can be combined to reduce the final number of bins.

\smallskip

After the initial packing, every F-bin will contain $L/2$ $OtherColors$ items so there will be $F = \left \lfloor{OtherCount/(L/2)}\right \rfloor$ F-bins after the initial packing.

\smallskip


After the initial packing there will be at most one P-bin. Whether the P-bin exists depends on how many $OtherColors$ items remain after all the F-bins are packed. If no $OtherColors$ items remain, no P-bin will exist. The number of $OtherColors$ items remaining is $R = \left \lfloor{OtherCount\%(L/2)}\right \rfloor$. If $R >$ 0, there will be a partially full bin containing $R$ $OtherColors$ items. Since we start and end this bin with a $MaxColor$ item, this bin must contain $R+1$ $MaxColor$ items. If the bin cannot fit at least two more items then it is a partially full bin that is not a P-bin (i.e. it will not get any additional items in the \textsc{condense} step). If the bin has space for at least two more items, then it is a P-bin and in the \textsc{condense} step, we can pack an equal number of additional $OtherColors$  and $MaxColor$ items in the bin until there are $L-1$ items in the bin. Therefore we can pack an additional $P = (L-1 - R - R - 1)/2$ $MaxColor$ items and $P$ $OtherColors$ items in this bin. If the P-bin does not exist, then $P = 0$.

\smallskip

There are $F(L/2)$ $MaxColor$ items packed in F-bins and $R+1$ $MaxColor$ items packed in the P-bin (if it exists). Therefore, after the initial packing the number of M-bins will be $M = MaxCount - F(L/2) - R-1$.
		
\smallskip
		
The \textsc{combine} step reduces the number of bins by producing \textit{combined bins}, i.e. bins that are packed with alternating $MaxColor$ items from the M-bins and $OtherColors$ items from the top of F-bins. After the initial packing, there will be $F-P$ $OtherColors$ items remaining. Therefore, the \textsc{combine} step uses these items to yield $\big\lfloor{(F-P)/\lfloor(L-1)/2\rfloor} \big\rfloor$ combined bins. There will be $(F-P)\%\lfloor (L-1)/2\rfloor$ $OtherColor$ items remaining after the \textsc{combine} step. In the final packing, these remaining $OtherColor$ items will end up in a partially filled bin (note that this bin is not to be confused with the P-bin from the initial packing). Let $RO$ denote the number of these remaining items. We can pack $RO+1$ $MaxColor$ items in the partially filled bin. Note that if $RO = 0$, no such partially filled bin will exist in the final packing.
		
\smallskip

To summarize: after the initial packing we have $F$ full bins (F-bins) and $M$ single-item bins containing only one $MaxColor$ item (M-bins). The number of \textit{combined bins} will be:

\medskip

$C = ( \big\lfloor{(F-P)/\lfloor(L-1)/2\rfloor}\big\rfloor)$

\medskip

After \textsc{combine}, the number of $MaxColor$ items remaining is:

\medskip

$X = M-C \cdot \lceil(L-1)/2 \rceil $ if $RO =0$, and

\medskip

$X = M- C \cdot \lceil(L-1)/2 \rceil -RO-1$ if $RO >0$.


\medskip

So the total number of bins is at least $X+C$. If $R > 0$, then there is a partially full bin (that is not a P-bin) from the initial packing so there will be one additional bin in the final count. Similarly, if $RO > 0$, there is a partially full bin after \textsc{combine} so there will be one additional bin in the final count.

\medskip
		
\item $L$ is even and $Dispcrepancy \le 0$:

Since $Discrepancy$ is no more than zero, there must be enough $OtherColors$ items to pack all the $MaxColor$ items such that there will be no single-item bins containing only one $MaxColor$ item. Therefore, the number of bins depends on the bin capacity: specifically there will be $\lceil {n/L} \rceil$ bins.

\medskip

\item $L$ is odd and $Disrepancy > 0$:

We consider two sub-cases based on $Discrepancy$ ($D$):

\begin{itemize}

\item[$\bullet$] $D le \lceil{OtherCount / \lfloor{L/2}\rfloor}\rceil$:

In this case, there are enough $OtherColor$ items such that we can eventually reduce $D$ to zero. Therefore there will be no single-item bins containing only one $MaxColor$ item. Furthermore, we will pack $D$ bins before $D$ is reduced to zero. Therefore there are $n - D \cdot L$ items remaining after $D$ bins are packed in the initial packing and we will need $\lceil{n/L - D} \rceil$ bins for these items. Therefore the total number of bins is $D + \lceil{n/L - D} \rceil$.

\item[$\bullet$] {$D > \lceil{OtherCount / \lfloor{L/2}\rfloor}\rceil$}:

In this case, after the initial packing there will be some single-item bins that contain only one $MaxColor$ item. Since $L$ is odd and we pack each bin starting and ending with a $MaxColor$ item, each bin will fit one more $MaxColor$ item than $OtherColors$ item. Since in the initial packing we packed $\lceil{OtherCount / \lfloor{L/2}\rfloor}\rceil$ bins that contain both $MaxColor$ and $OtherColor$ items, we pack $\lceil{OtherCount / \lfloor{L/2}\rfloor}\rceil$ more $MaxColor$ items than $OtherColor$ items during the initial packing. Therefore the initial packing also yields $MaxCount - OtherCount - \lceil{OtherCount / \lfloor{L/2}\rfloor}\rceil$ $MaxColor$ items that are each packed in their own separate bin. Therefore, the total number of bins is $\lceil{OtherCount / \lfloor{L/2}\rfloor}\rceil + MaxCount - OtherCount - \lceil{OtherCount / \lfloor{L/2}\rfloor}\rceil = D$.

\end{itemize}

\item {$L$ is odd and $Discrepancy le 0$}

This case is identical to the case when $L$ is even and $Discrepancy \le 0$.

\end{enumerate}

\section{Conclusion}
\label{conc}

We studied the Colored Bin Packing problem where a set of items with varying weights and colors must be packed into bins of uniform weight limit such that no two items of the same color may be packed adjacently within a bin. We solved this problem for the case where there are two or more colors and we consider two settings: when the items have zero weight and when the items have unit weight. We presented optimal linear-time algorithms for both cases. Since our algorithms apply for two or more colors, they demonstrate that the problem does not get harder as the number of colors increases.

In our version of the problem, there is no ordering among the items. An important extension would be to develop algorithms that solve the problem when the items are received in a particular order and must be packed in that order. This setting is referred to as \textit{restricted offline}~\cite{jbalogh}. Another important extension would be to develop approximation algorithms for the setting where items have weight greater than one.

\end{document}